\documentclass[a4paper,11pt]{article}
\pdfoutput=1 

\usepackage{jheppub}

\usepackage[T1]{fontenc} 

\usepackage{amsthm}
\theoremstyle{definition}
\newtheorem*{hypH}{Hypothesis H}
\newtheorem{lemma}{Lemma}

\newtheorem{remark}{Remark}
\newtheorem*{corollary}{Corollary}

\usepackage{ifpdf}
\ifpdf
  \usepackage[all,pdf]{xy} 
\else
  \input xy 
  \xyoption{all}
  \xyoption{v2}
\fi

\SelectTips{cm}{11} 

\def \RR {\mathbb{R}}
\def \ZZ {\mathbb{Z}}
\def \TT {\mathbb{T}}
\def \into {\hookrightarrow}
\DeclareMathOperator{\String}{String}
\DeclareMathOperator{\Spin}{Spin}
\DeclareMathOperator{\Sp}{Sp}
\DeclareMathOperator{\pr}{pr}
\DeclareMathOperator{\ev}{ev}
\DeclareMathOperator{\Hom}{Hom}

\title{\boldmath Topological sectors for heterotic M5-brane charges under Hypothesis H}

\author{David Michael Roberts}
\affiliation{School of Mathematical Sciences,\\The University of Adelaide, Australia}
\emailAdd{david.roberts@adelaide.edu.au}

\abstract{Assuming Fiorenza--Sati--Schreiber's Hypothesis H, on the charge quantization of M-theory's $C$-field, the topological sectors of the resulting $\String^{c_2}(4)$-valued higher gauge theory on a heterotic M5-brane are classified by homotopy classes of maps from the worldvolume $\Sigma_{M5}$ to $B\String^{c_2}(4)$.
This note calculates the sectors in a number of examples of M5-brane topology, including examples considered in the 3d-3d correspondence, the emergence of skyrmions from higher-dimensional instantons and Witten's analysis of the S-duality of 4d Yang--Mills theory.}

\arxivnumber{2003.09832}

\notoc
\begin{document} 
\maketitle
\flushbottom

\section{Introduction}

Non-abelian higher gauge theory is having an increasing impact on string and M-theory \cite{Durham}.
Whereas the interpretation of the $B$-field as the curving on a gerbe is well-established \cite{FreedWitten,CMJ}, higher gauge theories generalising non-abelian gauge theories like Yang--Mills theory are less developed from the point of view of applications and examples, even if there is plenty of theoretical development (for example, but not limited to, \cite{FSS12,NSS15}).
Moreover, the appearance of higher non-abelian gauge theory can be quite unfamiliar, where the `fields' are much more exotic objects than just Lie algebra-valued forms, or sections of certain vector bundles (even the more familar RR fields are, when fully analysed \cite{DFM}, really valued in differential K-theory, a generalised cohomology theory \cite{Freed}).
This means it may be less obvious what the content of the gauge theory is, and what can be calculated about it.

Consider for the sake of analogy, Yang--Mills theory in 4D. 
If one is considering just topological information, as in this article, then we can ignore questions of metric signature.
The topological sectors for (classical) Yang--Mills with structure group $G$ on spacetime $X$ are classified by continuous maps $c\colon X\to BG$, where $BG$ is the classifying space of $G$.
More precisely, the sectors are in bijection with homotopy classes of such maps, where two maps $c_0$, $c_1$ are homotopic if there is a continuous family of maps $c_t\colon X\to BG$, $0\leq t \leq 1$.
The topologically trivial sector is represented by the map defined by $c(x) = \ast$ for all $x\in X$, where $\ast\in BG$ is a fixed basepoint.

In the case of the BPST $SU(2)$-instanton on $\RR^4$, there is an additional constraint, namely that as $|x| \to \infty$, $c_{\mathrm{BPST}}(x) \to \ast$.
Since this is purely topological, the rate of convergence is not important, unlike the case of thinking of the instanton as a gauge field, and requiring its energy to be finite.
As a consequence, $c_{\mathrm{BPST}}$ extends to a continuous map on the compactification of $\RR^4$, namely $S^4 \to BSU(2)$, sending $\infty \mapsto \ast \in BSU(2)$.
However, we know that homotopy classes of maps $S^4 \to BSU(2)$ are in bijection with the homotopy group $\pi_4(BSU(2)) \simeq \pi_3(SU(2)) \simeq \ZZ$.
Thus the topological sectors are labelled by instanton charge, as is well-known.

For higher gauge theory, the Lie group $G$ is replaced by an \emph{Lie $n$-group}: a higher categorical geometric object that has grouplike structure. 
The case of $n=2$ can be specified by giving a Lie crossed module (see the review in \cite{HDA_V}), with the most famous being the String 2-groups $\String_G$, for $G$ a compact simple simply-connected Lie group \cite{BCSS} (for $G=\Spin(n)$, we write $\String(n)$ instead of $\String_{\Spin(n)}$).
We will only be considering 2-groups in this article, and do not need the specifics of such structures; all that is needed is the fact that $2$-groups also have a classifying space \cite{BS}, analogous to $BG$.
For the topological sector analysis we present here, knowing the classifying space is sufficient, as we need to calculate the homotopy classes of maps to $B\String^{c_2}(4)$.

The 2-group we are interested in is a modified version of $\String(4)$, analogous in one sense to how $\Spin^c(n)$ is a modified version of $\Spin(n)$ \cite{Sati_11}. 
Whereas $\Spin(n) \to SO(n)$ is onto with kernel $\{\pm 1\}$, $\Spin^c(n) \to SO(n)$ is onto with kernel the larger group $U(1)$.
In the same way, $\String(n) \to \Spin(n)$ is onto with kernel the 2-group $\mathbf{B}U(1)$, but $\String^{c_2}(n)$ is onto with kernel the larger 2-group $\String_{\Sp(1)}$.
See \cite[\S 2.2.2]{SSS} for an extended discussion of the formal definition of these twisted String 2-groups, and their relevance to M-theory.

The reason that this family of 2-groups, or specifically the 2-group $\String^{c_2}(4)$, is of interest, is that from Hypothesis H \cite{Sati_18,FSS_20a} it follows that a heterotic M5-brane \cite{Ovrut} automatically carries fields from the higher gauge theory with $\String^{c_2}(4)$ as the structure 2-group \cite{FSS_20b}.
Hypothesis~H itself is about the true nature of the higher gauge theory of the $C$-field in M-theory:

\begin{hypH}[Fiorenza--Sati--Schreiber]
The M-Theory $C$-field is charge-quantized in $J$-twisted cohomotopy theory.
\end{hypH}

This purely mathematical hypothesis has been shown \cite{BSS19,SS19a,SS19b,FSS19c,FSS_20a,FSS_20b,SS20} to imply a host of anomaly cancellation and other consistency conditions previously proposed in the literature on physical grounds.

\begin{corollary}[{\cite[Theorem 1]{FSS_20b}}]
A heterotic M5-brane worldvolume $\Sigma_{M5}$ carries a $\String^{c_2}(4)$-valued higher gauge theory, topologically classified by a homotopy class of maps $\Sigma_{M5} \to B\String^{c_2}(4)$.
\end{corollary}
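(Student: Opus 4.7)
The plan is to split the argument into two conceptually distinct inputs: (i) the identification, coming from Hypothesis H, of the higher gauge structure induced on the M5-brane worldvolume, and (ii) the general classifying-space theorem for principal 2-bundles, which converts higher-bundle data into homotopy classes of maps.

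First I would invoke Hypothesis H on the ambient 11-dimensional spacetime, so that the $C$-field is represented by a cocycle in $J$-twisted cohomotopy of degree $4$, i.e.\ by a map into a twisted form of $S^4$ regarded as a classifying space for unstable cohomotopy. I would then specialise to the heterotic setting of \cite{Ovrut}, in which $\Sigma_{M5}$ sits inside an 11-dimensional bulk with prescribed $\Sp(1)$-structure on the transverse $\RR^4$ directions, so that the tangential and normal structure of $\Sigma_{M5}$ contributes a definite twist to the restricted cocycle.

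The main step --- and the main obstacle --- is the homotopy-theoretic analysis carried out in \cite[Thm.~1]{FSS_20b}: restricting the $J$-twisted cohomotopy cocycle to $\Sigma_{M5}$ and tracking the normal-bundle twist produces data which, up to homotopy, factors through the classifying space $B\String^{c_2}(4)$, itself constructed as the homotopy fibre of the characteristic class $\tfrac{1}{2}p_1 - c_2$ on $B(\Spin(4)\times \Sp(1))$ in the spirit of \cite[\S 2.2.2]{SSS}. Carrying this out requires an explicit model of $\String^{c_2}(4)$ as a smooth 2-group analogous to \cite{BCSS}, together with a careful matching of the characteristic-class data produced by Hypothesis H with the obstruction that defines this 2-group; it is this matching that makes the reduction nontrivial and that one is importing wholesale from \cite{FSS_20b}.

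Finally, once the structure 2-group $\String^{c_2}(4)$ has been identified, I would conclude by applying the classifying-space theorem for 2-group bundles \cite{BS}, which yields a natural bijection between equivalence classes of principal $\String^{c_2}(4)$-2-bundles over $\Sigma_{M5}$ and homotopy classes of maps $\Sigma_{M5}\to B\String^{c_2}(4)$. Beyond these three ingredients --- Hypothesis H, \cite[Thm.~1]{FSS_20b}, and \cite{BS} --- the corollary is pure packaging, and no further calculation is required at this stage; the genuine topological content will only appear later when specific $\Sigma_{M5}$ are plugged in.
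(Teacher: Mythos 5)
Your proposal is correct and follows essentially the same route as the paper: the substantive content is imported wholesale from Hypothesis~H together with \cite[Theorem 1]{FSS_20b}, and the reformulation of sectors as homotopy classes of maps $\Sigma_{M5}\to B\String^{c_2}(4)$ rests on the classifying-space result of \cite{BS}, exactly as the paper's outline does. Your description of $B\String^{c_2}(4)$ as the homotopy fibre of $\tfrac12 p_1 - c_2$ on $B(\Spin(4)\times\Sp(1))$ is equivalent to the paper's homotopy-pullback square over $B^3U(1)$, so this is only a cosmetic difference.
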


So while we do not work directly with the content of Hypothesis~H here, we will calculate the possible  topological sectors of the $\String^{c_2}(4)$ higher gauge theory on the simplest examples of M5-branes.
We note \cite{Sati_11} previously showed that M5-brane worldvolumes admitted certain twisted String structures, different to the ones considered here.

\begin{remark}
In what follows, we do not consider the quantum version of such a higher gauge theory, just as the analysis of the topological sectors of YM theory above merely considers the classical field theory.
\end{remark}

Just as in the case of ordinary gauge theory, the topological sectors for $\String^{c_2}(4)$ higher gauge theory on a manifold $M$ are given by homotopy classes of maps from $M$ to the classifying space $B\String^{c_2}(4)$.
To do this, we need to understand the homotopy type of the classifying space $B\String^{c_2}(4)$, at least in low dimensions.
This will be considered in Section~\ref{sect:pi_n_BStringc} below, before the calculations for each case of M5-brane topology in Section~\ref{sect:calc_sectors}.

It is worth outlining exactly how this higher gauge theory arises, since this is reflected in the construction of its classifying space, and in the classication calculations below.
We shall only give the barest outline, as the proof is rather involved, with many geometric objects coming together in exceptional ways.

Following \cite{FSS_20b}, we assume the spacetime manifold is of the form $\RR^{2,1}\times X^8$, for some 8-manifold $X^8$.
There is an embedding map $\Sigma_{M5} \xrightarrow{\phi} \RR^{2,1}\times X^8$ for the M5-brane.
The spacetime manifold carries the $C$-field, living in $J$-twisted cohomotopy, which can be represented by a map $C\colon \RR^{2,1} \times X^8 \to S^4 /\!\!/ \Spin(5) \simeq B\Spin(4)$.
By composition, this gives a map $\Sigma_{M5} \xrightarrow{C\circ \phi}  B\Spin(4)$.
The M5-brane carries, among other things, an $\Sp(1)$-valued gauge field, represented by a map $\Sigma_{M5} \to B\Sp(1)$.
Together, these satisfy an compatibility condition \cite{Sati_11}, so that the two maps, $\Sigma_{M5} \to B\Spin(4)$ and $\Sigma_{M5} \to B\Sp(1)$, assemble into a \emph{higher} gauge field $\Sigma_{M5} \to B\String^{c_2}(4)$.
Thus one can think of the $\String^{c_2}(4)$ higher gauge theory on the M5-brane as having a component inherited from the ambient spacetime, as well as a component arising from a gauge field living on the brane itself.
We shall discuss below for specific examples how to view the classification of the sectors in Section~\ref{subsect:class_table}, in light of this picture.

\section{Results}

The results for each M5-brane topology are given in Table~\ref{fig:results}, with a reference to where it is calculated in Section~\ref{sect:calc_sectors}. 
In the next subsection, we give some (mildly speculative) discussion about how these classifications relate to the physics literature, in a sample of cases.

\subsection{Sector classification}
\label{subsect:class_table}

Some of the cases in Table~\ref{fig:results} have boundary conditions, and this imposes extra constraints on the solutions, so that sectors can appear or disappear, even though the brane topology is identical.
We include the signature of the metric on the worldvolumes listed in the table, to keep things clear, but nothing relying on the metric will enter into the calculations.

\begin{table}[t]
\centering
\centerline{\begin{tabular}{cccc}
$\Sigma_{M5}$  & Boundary conditions & Sectors classified by & Details\\
\hline \\
$\RR^{5,1}$ & -- & $0$ & \ref{subsect:trivial} \\
$\RR^{5,1}$  & trivial at spatial $\infty$ & $(\ZZ_2)^3$ & \ref{subsect:flat_decay_at_infty} \\
$\RR^{4,1}\times S^1$ & -- & $0$ & \ref{subsect:trivial} \\
$\RR^{4,1}\times  S^1$ & trivial at spatial $\infty$ & $(\ZZ_2)^3$ & \ref{subsect:S1_decay} \\
$\RR^{3,1}\times S^2$ & -- & $0$ & \ref{subsect:trivial} \\
$\RR^{3,1}\times \TT^2$ & -- & $0$ & \ref{subsect:trivial} \\
$S^4 \times \TT^2$ & Wick rotated & $\ZZ^2 \times (\ZZ_2)^9$ & \ref{subsect:T2} \\
$\RR^{2,1} \times S^3$ & -- & $0$ & \ref{subsect:S3_unwrapped} \\
$\RR^{2,1} \times S^3$ & trivial at spatial $\infty$ & $(\ZZ_2)^3$ & \ref{subsect:S3_unwrapped_decay} \\
$\RR^3 \times S^3$ & Wick rot., trivial at $\infty$ & $(\ZZ_2)^3$ & \ref{subsect:S3xS3_Wick}\\
$\RR^{1,1}\times S^4$  & --  & $\ZZ^2$ & \ref{subsect:S4} \\
$\RR^{0,1}\times S^5$  & -- & $(\ZZ_2)^3$ & \ref{subsect:flat_decay_at_infty} \\
$\RR^{0,1} \times S^3 \times \TT^2$ & -- & $\ZZ^4 \times (\ZZ_2)^3$ & \ref{subsect:RxS3xT2} \\
\\
\hline
\end{tabular}}
\caption{Classification of topological sectors for a $\String^{c_2}(4)$ higher gauge theory on heterotic M5-brane worldvolume $\Sigma_{M5}$.}
\label{fig:results}
\end{table}

\subsection{Discussion}

Some of the above topologies for $\Sigma_{M5}$ are rather simple, and do no support nontrivial higher gauge sectors.
This is worth checking, just in case there are hidden surprises, or to confirm physical intution that there may not be anything topologially nontrivial, or to avoid fruitless searching for phenomenology when it cannot appear.
But some of the cases carry many interesting essentially independent topological sectors, and even torsion sectors, for instance the case of 5d instantons (the second and last rows of Table~\ref{fig:results}).
Even more interesting is when these nontrivial results arise on topologies that are already considered in the literature.
Only a few will be pointed out here; the reader can check their own favourite model against the table.

\begin{itemize}
\item The case of $\Sigma_{M5} = \RR^{2,1} \times S^3$ appears in the 3d-3d correspondence (see \cite{Dimofte} for a review).
Here we find that what is needed for nontrivial sectors is some spatial boundary condition, so that the gauge field extends to the (spatially) compactified $S^2\times S^3$.
A similar case to consider (after Wick rotation) is $\Sigma_{M5} = S^3 \times S^3$.
This is a more complicated calculation without some boundary condition on one of those $S^3$-factors, involving the homotopy type of a space of arbitrary maps on the 3-sphere.

\item Wick-rotated instantons on $\RR^{3,1}\times \TT^2$ and
Lorentzian instantons on $\RR^{4,1}\times S^1$ become skyrmions down in 4d (see eg, \cite{Sutcliffe}, highlighting ideas from \cite{AtiyahManton}). 
These latter are solitonic---and hence topologically stable---models for nucleons.
In the style presented here, the connection to instantons/skyrmions is somewhat obscured. 
However, by Remark~\ref{rem:homotopy_pullback} below, the homotopical data of a map to $B\String^{c_2}(4)$ consists of a map to $B\Spin(4)$ and a map to $B\Sp(1)\simeq BSU(2)$ and consistency data.
These (homotopy classes of) maps to $BSU(2)$ are more easily recognised as picking an instanton sector.
Further, given periodicity conditions, as in some of the calculations below, this results in (homotopy classes of) maps to $SU(2)$ itself, hence picking out a skyrmion sector.

\item The case of $\Sigma_{M5} = \RR^{0,1} \times S^3 \times \TT^2$ was considered in \cite{Witten_04}, who outlined how S-duality of 4d YM-theory should arise as the residual $SL(2,\mathbb{Z})$-action on $\TT^2$-compactifications of an M5-brane. 
Here $\TT^2$ is really (when the geometry is put back in) an elliptic curve (over $\mathbb{C}$), but topologically is just a torus.
Interestingly, Witten expresses a ``doubt'' in \cite{Witten_04} that the 6d ``quantum nonabelian gerbe theory'' he considers (on a stack of \emph{coincident} M5-branes) is the quantisation of a classical system; what we are considering here is a \emph{single} M5-brane, and the (nonabelian) higher gauge theory arises in a completely different way.

\end{itemize}

While we are not going to discuss each topology in detail, it might be worth considering exactly what these classes of maps are, in an illustrative case.
Unlike more familiar cases of classifications of instantons by integers, such as the 4d Yang--Mills instanton outlined above, we find that the higher gauge theory on $\RR^{1,1}\times S^4$ with no boundary conditions has sectors labelled by a \emph{pair} of integers.
These two copies of $\ZZ$ play different, asymmetric roles.
Recall that up to homotopy, the gauge field on $S^4$ (as we can ignore the $\RR^{1,1}$-factor) is given by a map $S^4 \to B\String^{c_2}(4)$.
That is, we are wrapping an $S^4$ inside the classifying space, which as in Section~\ref{sect:pi_n_BStringc} below, is constructed from a $B\Spin(4) \simeq B\Sp(1)_L\times B\Sp(1)_R$ and another $B\Sp(1)$, which we shall temporary denote by $\Sp(1)_G$.
The former is linked the geometry of the ambient spacetime and the latter to a gauge field on the M5-brane.
Recall first that $\pi_4(B\Sp(1)) = \ZZ$, so that a 4-sphere can wrap around a copy of $B\Sp(1)$ an integer number of ways, up to homotopy.
Because of the compatibility condition that gives us maps to $B\String^{c_2}(4)$, we can think of the sectors in this case as being labelled by the pair of integers $(n_L,n_R)$, where the $S^4$ wraps around the $B\Sp(1)_L$ $n_L$ times, around the $B\Sp(1)_R$ $n_R$ times, and the $B\Sp(1)_G$ $n_L$ times.

The appearance of torsion classes in, for example, the case $\Sigma_{M5} = S^5$, is also less familiar. 
Here the interpretation of such classes is perhaps a bit more speculative.
One option might be an old observation of Witten \cite{Witten_83}, that links the classes in $\pi_4(SU(2))\simeq \ZZ_2$ with whether a quantised soliton is a boson or a fermion.
Since homotopy classes of maps $S^5\to B\String^{c_2}(4)$ are assembled (similar to the previous paragraph) from compatible maps $S^5 \to B\Spin(4)$ and $S^5\to BSp(1)_G = BSU(2)$, the ordinary $SU(2)$ gauge field on $S^5$ lives in one of two sectors, corresponding to $\pi_5(BSU(2)) \simeq \pi_4(SU(2))$.
This can be thought of as an analogue of the instanton/skyrmion picture, where the instanton charge becomes the baryon number.
There is still, of course, two more factors of $\ZZ_2$, but these arise from spacetime geometry, which we are not considering here; their interpretation in this boson/fermion picture remains to be seen.

While the analysis here gives all possible sectors for a $\String^{c_2}(4)$ higher gauge theory on $\Sigma_{M5}$, it ignores the complete picture of \cite{FSS_20b}, whereby there will be additional aspects relating to the topology of spacetime itself, in which $\Sigma_{M5}$ sits.
One may visualise the situation as follows.
Fix an embedding $\Sigma_{M5} \xrightarrow{\phi} \RR^{2,1}\times X^8$.
Then what is analysed here is only the \emph{codomain} of the forgetful map
\[
	\xymatrix{
	\{\text{$C$-field on spacetime + compatible $B$-field on $\Sigma_{M5}$}\} \ar[d]^{\text{restrict $C$-field}}\\ 
	\{\text{$C$-field on $\Sigma_{M5}$ + compatible $B$-field on $\Sigma_{M5}$}\} \ar[d]^{\text{assemble into higher gauge field}}\\ 
	\{\text{$\String^{c_2}(4)$ higher gauge field on $\Sigma_{M5}$}\}
	}
\]
where the $C$-field is valued in $J$-twisted 4-cohomotopy, and the $B$-field is valued in twisted 3-cohomotopy.
In principle, this map could be neither surjective nor injective.
If it is not surjective, then there are additional constraints on the higher gauge theory coming from the spacetime topology, and the choice of M5-brane embedding.
This would tend to lead to fewer sectors, potentially removing some of the zoo of torsion phenomena in Table~\ref{fig:results}.
However, this has to be balanced against a potential \emph{increase} in sectors coming from a lack of injectivity: this would result from multiple, topologially inequivalent combinations of $C$- and $B$-fields (in this twisted cohomotopy picture) giving rise to higher gauge fields on $\Sigma_{M5}$ in the same topological sector.
This type of analysis is beyond the scope of the current article, though we note the superficially similar analysis of a forgetful map in \cite{BSS19}, related to fractional D-brane charges.

However, if one wishes to engage in a little more speculation, taking the higher gauge fields on the M5-brane as a pure exercise in field theory, with no relation to the M-theory spacetime picture in \cite{FSS_20b}, then there may be additional interpretations for some of these torsion sectors.

So far, we have not particularly discussed the group structure on the set of topological sectors.
For the case of $SU(2)$-instantons on $S^4$, we can take the sector label in $\ZZ$ to represent total change, and this is additive on combining instantons with disjoint support.
For the case of the $\String^{c_2}(4)$ higher gauge fields as calculated here, we don't just have the result that there are exactly eight topological sectors, but that they form the finite group $(\ZZ_2)^3$.
The three sectors $(1,0,0)$, $(0,1,0)$ and $(0,0,1)$ are independent, and could be interpreted as being a kind of quasi-particle excitation of the M5-brane on which they live.
The addition in the group $(\ZZ_2)^3$ corresponds to combining gauge fields with disjoint support, so that these three quasi-particles are their own antiparticles, as $(1,0,0) + (1,0,0) = (0,0,0)$, and similarly for the other two cases.
This should be interpreted in the sense that combining two copies of $(1,0,0)$ ceases to be topologically stable and can decay to the trivial solution.
It remains to be seen, though, whether such a ``higher quasi-particle'' interpretation stands up to scrutiny with a more physical mindset.

\section{Mathematical details}

\subsection{Low-dimensional homotopy groups of the classifying space}
\label{sect:pi_n_BStringc}

The results of this section use standard results in elementary homotopy theory, but we gather them here for easy reference.
A standard text for more background of this section is the first few  chapters of \cite{Switzer}.

The classifying space of $\String^{c_2}(4)$ is defined \cite{SSS} to be the homotopy pullback
\begin{equation}\label{eq:defn_Stringc2}
	\raisebox{5ex}{\xymatrix{
		B\String^{c_2}(4) \ar[r]_{\ }="s" \ar[d]_\pi & B\Sp(1)_L \ar[d]^{c_2}\\
		B\Spin(4) \ar[r]_{\frac12 p_1}^{\ }="t" & B^3U(1)
		\ar@{}"s";"t"|{\simeq}
	}}
\end{equation}
where $\frac12p_1$ and $c_2$ are the universal fractional first Pontryagin class and the universal second Chern class, respectively.\footnote{The reason for the subscript on $B\Sp(1)$ is that $\Spin(4)\simeq \Sp(1)_L \times \Sp(1)_R$, and these two copies play different roles.}
\begin{remark}
\label{rem:homotopy_pullback}
It is worth emphasising how maps to this space behave, as the construction is not common outside of algebraic topology/homotopy theory.
As discussed earlier, a map to $B\String^{c_2}(4)$ is given by a map $f$ to $B\Spin(4)$ and a map $g$ to $B\Sp(1)_L$, together with a compatibility condition.
This condition is the data of a homotopy between $f\circ \frac12p_1$ and $g\circ c_2$, both maps to $B^3U(1)\simeq K(\mathbb{Z},4)$.
Thus both of these composite maps not only give the same class in fourth integral cohomology, there is additional data relating them.
\end{remark}

We can present the map $c_2$ as a fibration without changing the homotopy type of $B\Sp(1)_L$, so that the above square can be assumed to be an honest pullback, and $\pi$ will thus also be a fibration.
Moreover, the fibre of $c_2$, and hence $\pi$, is $B\String_{\Sp(1)_L}$.
The main tool we will use is the homotopy long exact sequence of a pointed fibration $(F,p)\stackrel{i}{\into} (E,p)\xrightarrow{\pi} (B,b)$ with $\pi(p)=b$ and $F = \pi^{-1}(b)$ (e.g.~\cite[\S4.7]{Switzer}):
\[
	\cdots \to \pi_{n+1}(B,b) \to \pi_n(F,p) \xrightarrow{i_*} \pi_n(E,p) \xrightarrow{\pi_*} \pi_n(B,b) \to \pi_{n-1}(F,p) \to \cdots
\]
If we are given a map $f\colon B' \to B$, we have a pullback square of fibrations
\[
	\xymatrix{
		E' \ar[d]_{\pr_1} \ar[r]^-{\widetilde{f}} & E \ar[d]^\pi \\
		B' \ar[r]_f & B
	}
\]
such that the fibre of $E' := B' \times_B E \to B'$ is canonically isomorphic to $F$ (we shall silently make this identification in what follows).
If we are given a basepoint $b'\in B$, then this becomes a square of pointed maps, where $p' := (b',p) \in E'$, and there is then a diagram of homotopy groups
\begin{equation}\label{eq:les_natural}\hspace{-1em}
		\raisebox{5ex}{\xymatrix@C=0.6pc{
		 \cdots\ar[r] & \pi_{n+1}(B',b') \ar[r] \ar[d]_{f_*} & \pi_n(F,p) \ar[r] \ar[d]_{\simeq} & \pi_n(E',p') \ar[r] \ar[d]_{\widetilde{f}_*} & \pi_n(B',b') \ar[r] \ar[d]_{f_*} & \pi_{n-1}(F,p) \ar[r] \ar[d]_{\simeq} & \cdots\\
		 \cdots\ar[r] & \pi_{n+1}(B,b) \ar[r] & \pi_n(F,p) \ar[r] & \pi_n(E,p) \ar[r] & \pi_n(B,b) \ar[r] & \pi_{n-1}(F,p) \ar[r] & \cdots
	}}
\end{equation}
where the rows are exact.
Applying the homotopy long exact sequence to the fibrations $G \to EG \to BG$ and\footnote{$PG$ is the space of paths starting at the identity of $G$, and $\Omega G$ denotes the subspace of based loops.} $\Omega G \to PG \to G$ gives the standard isomorphisms
\begin{equation}\label{eq:standard_iso_on_htpy_grps}
	\pi_{k+1}(BG) \simeq \pi_k(G)
	\simeq \pi_{k-1}(\Omega G),
\end{equation}
as both $EG$ and $PG$ are contractible, and where these spaces have their canonical basepoints.
We also note the fact that the functors $\pi_k$, $k\geq 0$ preserve products: $\pi_k(X\times Y) \simeq \pi_k(X) \times \pi_k(Y)$ (suppressing basepoints).

The only specific homotopy groups needed below are the following:

\begin{lemma}\label{lemma:pis_of_BStringc}
The low-dimensional homotopy groups of $B\String^{c_2}(4)$ are
\[
	\pi_k(B\String^{c_2}(4)) = \begin{cases}
	0 & k=1,2,3\\
	\ZZ^2 & k=4\\
	(\ZZ_2)^3 & k=5,6\\
	\end{cases}
\]
\end{lemma}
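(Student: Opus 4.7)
The plan is to apply the Mayer--Vietoris long exact sequence of homotopy groups associated to the homotopy pullback square (\ref{eq:defn_Stringc2}), which has the form
\[
\cdots \to \pi_n(B\String^{c_2}(4)) \to \pi_n(B\Sp(1)_L) \oplus \pi_n(B\Spin(4)) \to \pi_n(B^3 U(1)) \to \pi_{n-1}(B\String^{c_2}(4)) \to \cdots,
\]
with middle map the difference $(c_2)_* - (\tfrac12 p_1)_*$. This can be obtained by splicing the long exact sequence of the fibration $\pi$ (fibre $B\String_{\Sp(1)_L}$, base $B\Spin(4)$) with that of $B\String_{\Sp(1)_L} \to B\Sp(1)_L \xrightarrow{c_2} B^3 U(1)$, using the naturality diagram (\ref{eq:les_natural}).

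The first step is to assemble the input data. Using $B\Spin(4) \simeq B\Sp(1)_L \times B\Sp(1)_R$ together with the shift in (\ref{eq:standard_iso_on_htpy_grps}), the values of $\pi_k(B\Sp(1))$ for $k = 1,\ldots,6$ are $0, 0, 0, \ZZ, \ZZ_2, \ZZ_2$, so $\pi_k(B\Spin(4))$ in the same range is $0, 0, 0, \ZZ^2, (\ZZ_2)^2, (\ZZ_2)^2$. The homotopy groups of $B^3 U(1) \simeq K(\ZZ, 4)$ vanish except in degree $4$, where $\pi_4 \simeq \ZZ$.

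The vanishing of $\pi_n(B^3 U(1))$ for $n \neq 4$ causes the Mayer--Vietoris sequence to degenerate almost everywhere. In degrees $n = 1, 2, 3$ all surrounding groups are zero, so $\pi_n(B\String^{c_2}(4)) = 0$. In degrees $n = 5$ and $n = 6$ the sequence collapses to an isomorphism between $\pi_n(B\String^{c_2}(4))$ and $\pi_n(B\Sp(1)_L) \oplus \pi_n(B\Spin(4)) \simeq (\ZZ_2)^3$. The only substantive step is at $n = 4$, where the sequence reduces to the short exact sequence $0 \to \pi_4(B\String^{c_2}(4)) \to \ZZ \oplus \ZZ^2 \to \ZZ \to 0$. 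Since $c_2$ represents the generator of $H^4(B\Sp(1); \ZZ)$, the map $(c_2)_* \colon \pi_4(B\Sp(1)_L) \to \pi_4(K(\ZZ,4))$ is an isomorphism by Hurewicz, which makes the middle map automatically surjective and its kernel free abelian of rank $2$, giving $\pi_4 \simeq \ZZ^2$.

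The main thing to be careful about is an extension problem that appears if one works with the fibration $\pi$ alone: in degrees $5$ and $6$ its long exact sequence gives $0 \to \ZZ_2 \to \pi_n \to (\ZZ_2)^2 \to 0$, which a priori could produce $\ZZ_2 \oplus \ZZ_4$ rather than $(\ZZ_2)^3$. The Mayer--Vietoris formulation avoids this issue entirely, since the companion fibration provides the extra naturality (compatibly with diagram (\ref{eq:les_natural})) needed to see the splitting; equivalently, the two pullback projections out of $B\String^{c_2}(4)$ furnish the section required to split the extension.
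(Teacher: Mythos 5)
Your proof is correct, and it takes a genuinely different (though closely related) route from the paper's. The paper converts $c_2$ into a fibration, runs the homotopy long exact sequence of the resulting fibration $B\String_{\Sp(1)}\to B\String^{c_2}(4)\xrightarrow{\pi} B\Spin(4)$, and then in degrees $5$ and $6$ resolves the extension $0\to\ZZ_2\to\pi_n\to(\ZZ_2)^2\to 0$ by using the naturality diagram \eqref{eq:les_natural} to produce a left splitting via the projection to $B\Sp(1)_L$. Your Mayer--Vietoris sequence for the homotopy pullback \eqref{eq:defn_Stringc2} packages that splitting argument once and for all: in degrees $5$ and $6$ it gives $\pi_n(B\String^{c_2}(4))\simeq\pi_n(B\Sp(1)_L)\oplus\pi_n(B\Spin(4))\simeq(\ZZ_2)^3$ outright, with no extension problem, exactly the point your last paragraph makes (though what the projections furnish is a retraction, i.e.\ a left splitting, rather than a section). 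The trade-off sits in degree $4$: the paper reads off $\pi_4\simeq\ZZ^2$ immediately, since the fibre $B\String_{\Sp(1)}$ contributes $0$ in degrees $3$ and $4$, whereas you need the Hurewicz-type input that $(c_2)_*\colon\pi_4(B\Sp(1)_L)\to\pi_4(K(\ZZ,4))$ is an isomorphism to get surjectivity of the middle map and hence a free rank-two kernel --- which you supply. One small imprecision: for $n=3$ it is not literally true that ``all surrounding groups are zero'', since $\pi_4(B^3U(1))\simeq\ZZ$ maps onto $\pi_3(B\String^{c_2}(4))$ via the boundary map; the vanishing of $\pi_3$ is really the vanishing of the cokernel of the middle map in degree $4$, which follows from the same surjectivity you establish there, so the argument is complete once this is stated in the right order.
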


\begin{proof}
We will examine the homotopy long exact sequence diagram \eqref{eq:les_natural} applied to the pullback square \eqref{eq:defn_Stringc2} (leaving basepoints implicit), for $n=1,\ldots,6$, and use the following standard facts:
\begin{enumerate}
\item $\pi_k(B\Spin(4)) = 0$ for $1\leq k \leq 3$;
\item $\pi_{k+1}(B\Spin(4)) \simeq \pi_k(\Spin(4))$ for all $k$, by \eqref{eq:standard_iso_on_htpy_grps};
\item The isomorphism $\Spin(4) \simeq \Sp(1)_L \times \Sp(1)_R$ means that $\pi_k(\Spin(4)) \simeq \pi_k(\Sp(1))\times \pi_k(\Sp(1))$;
\item The group $\Sp(1)$ is the unit quaternions, hence is diffeomorphic to $S^3$;
\item $\pi_4(S^3) \simeq \ZZ_2$, $\pi_5(S^3) \simeq \ZZ_2$, and $\pi_6(S^3) \simeq \ZZ_{12}$, so that $\pi_5(B\Spin(4))\simeq (\ZZ_2)^2 \simeq \pi_6(B\Spin(4))$ and $\pi_7(B\Spin(4))\simeq (\ZZ_{12})^2$;
\item $B^3U(1)$ is an Eilenberg--Mac~Lane space, namely a $K(\ZZ,4)$, so that $\pi_k(B^3U(1)) = 0$ for $k\neq 4$, and $\pi_4(B^3U(1)) = \ZZ$;
\item The defining property of $\String_G$ is that $\pi_k(B\String_G) = 0$ for $k\leq 4$, but $\pi_k(B\String_{\Sp(1)}) \simeq \pi_k(B\Sp(1))$ for $k>4$, so that $\pi_5(B\String_G) \simeq \pi_4(S^3) \simeq \ZZ_2$ and $\pi_6(B\String_G) \simeq \pi_5(S^3) \simeq \ZZ_2$.
\end{enumerate}
Here is the diagram again, omitting basepoints\footnote{All the spaces are simply-connected, so there is no basepoint-dependence in any case.} and labels on maps where not needed:
\[
	\centerline{
	\xymatrix@C=.7pc{
		 \pi_{n+1}(B\Spin(4)) \ar[r] \ar[d] & \pi_n(B\String_{\Sp(1)}) \ar[r] \ar[d]_{\simeq} & \pi_n(B\String^{c_2}(4)) \ar[r] \ar[d] & \pi_n(B\Spin(4)) \ar[r] \ar[d] & \pi_{n-1}(B\String_{\Sp(1)}) \ar[d]_{\simeq} \\
		 \pi_{n+1}(B^3U(1)) \ar[r] & \pi_n(B\String_{\Sp(1)}) \ar[r] & \pi_n(B\Sp(1)) \ar[r] & \pi_n(B^3U(1)) \ar[r] & \pi_{n-1}(B\String_{\Sp(1)})
	}}
\]
With the above observations, we get the following 
sequences and diagrams.
For $n=1,2,3,4$ we need only consider the top (exact) row.

For $n=1,2$ this becomes
\[
0 \to 0 \to \pi_n(B\String^{c_2}(4)) \to 0 \to 0\qquad (n=1,2),
\]
which gives $\pi_1(B\String^{c_2}(4))=\pi_2(B\String^{c_2}(4))=0$.

For $n=3$ we get
\[
		\ZZ^2 \to 0 \to \pi_3(B\String^{c_2}(4)) \to 0 \to 0 
\]
which again will give $\pi_3(B\String^{c_2}(4))=0$.

For $n=4$ we get
\[
	(\ZZ_2)^2 \to 0 \to \pi_4(B\String^{c_2}(4)) \to \ZZ^2 \to 0 
\]
and so $\pi_4(B\String^{c_2}(4)) \simeq \ZZ^2$.

For $n=5$ we have the commuting diagram of abelian groups
\[
	\xymatrix{
		(\ZZ_2)^2 \ar[r]^-{(a)} \ar[d] & \ZZ_2 \ar[r]^-{(b)} \ar[d]_= & \pi_5(B\String^{c_2}(4)) \ar[r] \ar[d] & (\ZZ_2)^2 \ar[r] \ar[d] & 0 \\
		0 \ar[r] & \ZZ_2 \ar[r]^-{(c)} & \ZZ_2 \ar[r] & 0 
	}
\]
where the rows are exact.
Since the leftmost square commutes, we see that the map $(a)$ is the zero map, so that by exactness of the top row, $(b)$ is injective, and we thus have a short exact sequence
\[
	0\to \ZZ_2 \to \pi_5(B\String^{c_2}(4)) \to (\ZZ_2)^2 \to 0.
\]
Exactness in the bottom row implies that the map $(c)$ is the identity map.
Our diagram has now been reduced to
\[
	\xymatrix{
		0 \ar[r] & \ZZ_2 \ar[r]^-{(b)} \ar[dr]_= & \pi_5(B\String^{c_2}(4)) \ar[r] \ar[d]^{(d)} & (\ZZ_2)^2 \ar[r] & 0 \\
		& &  \ZZ_2
	}
\]
and the commuting triangle means that $(d)$ is a left splitting of the exact sequence.
Thus by, e.g.~\cite[Lemma 3.22]{Switzer}, $\pi_5(B\String^{c_2}(4)) \simeq (\ZZ_2)^3$.

For $n=6$ we have a similar commuting diagram of abelian groups
\[
	\xymatrix{
		(\ZZ_{12})^2 \ar[r]^-{(e)} \ar[d] & \ZZ_2 \ar[r] \ar[d]_= & \pi_6(B\String^{c_2}(4)) \ar[r] \ar[d] & (\ZZ_2)^2 \ar[r] \ar[d] & 0 \\
		0 \ar[r] & \ZZ_2 \ar[r]^= & \ZZ_2 \ar[r] & 0 
	}
\]
with exact rows. The map $(e)$ is trivial, so an identical analysis to the case $n=5$ means that the group $\pi_6(B\String^{c_2}(4))$ fits into the same short exact sequence, again with a left splitting.
Thus $\pi_6(B\String^{c_2}(4))\simeq (\ZZ_2)^3$, and this completes the proof.
\end{proof}

We need two other results about $B\String^{c_2}(4)$, which follow from general results about topological groups.
In the lemma, and below, we use $LX$ to denote the space of \emph{all} continuous loops in the space $X$.
When $G$ is a topological group, then $LG$ is a topological group where multiplication is pointwise multiplication of loops.
There is an exact sequence of (continuous) homomorphisms
\[
	\Omega G \to LG \xrightarrow{\ev} G
\]
where $\ev$, a fibration, is evaluation at the basepoint of the circle, and $\Omega G = \ker(\ev)$.

\begin{lemma}
\label{lemma:LG_semidirect_prod}
The evaluation map $\ev$ has a continuous splitting $G\to LG$ including $G$ as the subgroup of constant loops, so that $LG \simeq G\ltimes \Omega G$ (in particular, \emph{as topological spaces}, is just a product).
\end{lemma}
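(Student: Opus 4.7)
The plan is to write down an explicit continuous group homomorphism splitting $\ev$, and then invoke the standard fact that a split short exact sequence of topological groups realises the total space as a semidirect product, whose underlying topological space is a direct product.

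The obvious candidate for the section is $s\colon G \to LG$ sending $g$ to the constant loop at $g$, i.e.\ $s(g)(t) = g$ for all $t \in S^1$. Three properties must be checked. First, $s$ is a homomorphism, which is immediate since multiplication in $LG$ is pointwise: $s(gh)(t) = gh = s(g)(t)\cdot s(h)(t)$. Second, $\ev \circ s = \mathrm{id}_G$, which is tautological, since evaluating a constant loop at the basepoint returns its value. Third, $s$ is continuous; the cleanest way to see this is via the exponential law, which gives a bijection between continuous maps $G \to LG$ and continuous maps $G \times S^1 \to G$ (valid because $S^1$ is compact Hausdorff). The adjoint of $s$ is just the projection $(g,t) \mapsto g$, so continuity is manifest.

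Having produced a continuous homomorphic section, the standard semidirect-product decomposition for split extensions of topological groups applies: the map
\[
	G \times \Omega G \longrightarrow LG, \qquad (g,\gamma) \longmapsto s(g)\cdot \gamma,
\]
is a homeomorphism, with continuous inverse $\alpha \mapsto \bigl(\ev(\alpha),\, s(\ev(\alpha))^{-1} \cdot \alpha\bigr)$. Transporting the group structure across this homeomorphism identifies $LG$ with the semidirect product $G \ltimes \Omega G$, in which $G$ acts on $\Omega G$ by pointwise conjugation $(g\cdot \gamma)(t) = g\,\gamma(t)\,g^{-1}$. Forgetting the group structure recovers the product decomposition of underlying spaces that the lemma asserts.

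There is essentially no obstacle here; this is a routine piece of topological group theory. The only mildly non-formal ingredient is the exponential-law step, which is trivialised by the fact that the adjoint map is literally a projection. The substance of the lemma lies not in its proof but in its usefulness later, namely the identification $LG \simeq G \times \Omega G$ of topological spaces, which will be applied to $G = \String^{c_2}(4)$.
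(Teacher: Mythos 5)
Your proof is correct and is exactly the standard argument the paper has in mind: the lemma is stated without proof (as following from general facts about topological groups), and the intended splitting is precisely your constant-loop section, with the split extension giving $LG \simeq G\ltimes\Omega G$ as you describe. Nothing is missing; your continuity check via the exponential law and the explicit inverse $\alpha\mapsto\bigl(\ev(\alpha),\,s(\ev(\alpha))^{-1}\alpha\bigr)$ supply the details the paper leaves implicit.
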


\noindent The next result seems to be folklore, and is usually only stated for the case when $G$ is a Lie group.

\begin{lemma}
\label{lemma:BL_is_LB}
Let $G$ be a path-connected (compactly-generated weakly Hausdorff) topological group. Then there is a weak homotopy equivalence $BLG \to LBG$, in the sense that for all $k\geq 0$, $\pi_k(BLG) \simeq \pi_k(LBG)$.
\end{lemma}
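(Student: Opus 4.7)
\emph{Proof plan.} The strategy is to exhibit $LBG$ as a model for the classifying space of $LG$, and then invoke the uniqueness of classifying spaces (up to weak equivalence) for a well-behaved topological group.

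First, fix a universal principal $G$-bundle $EG \to BG$ with $EG$ contractible, and apply the free loop functor $L = \mathrm{Map}(S^1, -)$ to obtain a map $LEG \to LBG$. The loop group $LG$ acts pointwise on $LEG$ via $(\eta \cdot \widetilde{\gamma})(t) = \eta(t)\cdot \widetilde{\gamma}(t)$, using the principal $G$-action on $EG$ fiberwise over the circle. I would argue that $LEG \to LBG$ is a locally trivial principal $LG$-bundle: a trivialization of $EG$ over an open $U \subset BG$ gives, after applying $L$ pointwise, a trivialization of $LEG$ over $LU \subset LBG$, and more generally one trivializes near an arbitrary loop $\gamma \in LBG$ by breaking $S^1$ into finitely many arcs each landing in a trivializing patch of $BG$ and assembling the pieces. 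This step uses both the compactly generated weakly Hausdorff hypothesis (so the exponential law produces the expected mapping spaces) and path-connectedness of $G$ (so that $\gamma^* EG \to S^1$ is always trivializable, ensuring nonempty fibers).

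Second, I would note that $LEG$ is weakly contractible: a contraction $H\colon EG \times I \to EG$ induces, by applying $L$, a contraction of $LEG$ onto a constant loop. Together with the previous step, this identifies $LBG$ as a classifying space for $LG$, whence by Milnor-style uniqueness there is a weak equivalence $LBG \simeq BLG$. To produce the specific comparison map $BLG \to LBG$, I would pass to its adjoint $BLG \times S^1 \to BG$ and build this by the associated-bundle construction applied to $ELG \times S^1 \to BLG \times S^1$, using the family of group homomorphisms $\mathrm{ev}_t\colon LG \to G$, $\eta \mapsto \eta(t)$, and classifying the resulting principal $G$-bundle. That this adjoint map is a weak equivalence then follows because it pulls the universal $G$-bundle back to a bundle weakly equivalent to the one exhibiting $LBG$ as a classifying space for $LG$.

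The main obstacle is the point-set verification that $LEG \to LBG$ really is a locally trivial principal $LG$-bundle, rather than just a fiberwise quotient by a free action; this is precisely where the technical hypotheses on $G$ enter. Everything else---contractibility of $LEG$, uniqueness of classifying spaces, and the adjoint construction of the comparison map---reduces to standard formal manipulations in the homotopy theory of topological groups.
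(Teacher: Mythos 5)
Your route is genuinely different from the paper's. The paper uses Goodwillie's purely formal argument: working in pointed compactly generated weakly Hausdorff spaces with the smash/\(\Hom\) adjunction, one identifies \(\Omega LBG \simeq \Hom(S^1,\Omega BG) \simeq L\Omega BG \sim LG\), so that \(LBG\) is a delooping of \(LG\); no bundle-level geometry enters. You instead loop the universal bundle itself and try to exhibit \(LEG \to LBG\) as a universal principal \(LG\)-bundle, then invoke uniqueness of classifying spaces. This is the classical folklore route, and your outline can be completed, but the step you yourself flag---local triviality of \(LEG\to LBG\), i.e.\ making the ``assemble the pieces over arcs'' section vary continuously with the loop---is exactly where the difficulty concentrates for an arbitrary CGWH group: it wants some local contractibility of \(LBG\) (or numerability data) that is not free in this generality, which is precisely why the formal proof is preferable. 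The useful observation is that for the statement as claimed (isomorphisms on all \(\pi_k\)) you do not need local triviality or Milnor-style uniqueness at all: take \(EG\to BG\) numerable, hence a Hurewicz fibration; \(L=\Hom(S^1,-)\) preserves fibrations by a one-line exponential-law argument; the fibre of \(LEG\to LBG\) over the constant loop at the basepoint is \(LG\); \(LEG\) is contractible; and path-connectedness of \(G\) gives \(\pi_0(LBG)\cong\{\text{conjugacy classes in }\pi_1(BG)\}\cong\pi_0(G)=\ast\). The long exact sequence then gives \(\pi_k(LBG)\cong\pi_{k-1}(LG)\cong\pi_k(BLG)\) for all \(k\). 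Your \(\ev_t\)/associated-bundle construction of the comparison map is sound---it yields an \(LG\)-equivariant map \(ELG\to LEG\) covering \(BLG\to LBG\), so a five-lemma comparison of the two fibrations upgrades the abstract isomorphisms to the asserted weak equivalence. In short: the paper's adjunction argument buys freedom from all bundle-level point-set verification; your approach buys an explicit classifying map and a principal-bundle picture of \(LBG\), provided the local-triviality step is either carried out carefully or, better, replaced by the fibration-only argument above.
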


\begin{proof}
Following a suggestion of Tom Goodwillie, we can in fact give a weak homotopy equivalence $\Omega LBG \sim LG$, which implies that $LBG$ is a delooping of $LG$, or in other words, $LBG$ is a model for $BLG$. 
The composite of the following sequence of natural maps gives the needed weak homotopy equivalence:
\begin{align*}
\Omega LBG & \xrightarrow{\simeq} \Hom_*(S^1_*,LBG)\\ 
			& \xrightarrow{\simeq} \Hom_*(S^1_* \wedge S^1,BG)\\ 
			& \xrightarrow{\simeq} \Hom_*(S^1 \wedge S^1_*,BG)\\ 
			& \xrightarrow{\simeq} \Hom(S^1,\Omega BG)\\ 
			& \xrightarrow{\simeq} L\Omega BG\\
			& \xrightarrow{\sim} LG\,.
\end{align*}
Here the smash product $S^1_*\wedge  S^1$ is between the pointed circle $S^1_*$ and the circle with a disjoint basepoint added, and we have used the symmetric monoidal  closed structure on the category of compactly generated weakly Hausdorff pointed spaces with smash product.
\end{proof}

We also know that despite $\String^{c_2}(4)$ being a Lie 2-group, whence the higher gauge theory under consideration, there is a (compactly generated weakly Hausdorff) topological group $\mathcal{G}$ such that $B\String^{c_2}(4) \simeq B\mathcal{G}$ \cite{BS}---for example the geometric realisation of a certain simplicial Lie group arising from a crossed module representing $\String^{c_2}(4)$. 
Abusing notation, we shall also denote this topological group by $\String^{c_2}(4)$, as we are only interested in homotopical information in this article.
The homotopy type of the 2-group and the topological group are the same, in any case, and the topological group $\String^{c_2}(4)$ will be path-conncted, by virtue of $\Spin(4)$, $\Sp(1)$ and any $K(\ZZ,3)$ space being path-connected.
Thus it makes sense to say $LB\String^{c_2}(4)$ is weakly homotopy equivalent to $BL\String^{c_2}(4)$, using Lemma~\ref{lemma:BL_is_LB}, and that $L\String^{c_2}(4) \simeq \String^{c_2}(4) \ltimes \Omega\String^{c_2}(4)$, using Lemma~\ref{lemma:LG_semidirect_prod}.
We can thus use the isomorphisms \eqref{eq:standard_iso_on_htpy_grps} and preservation of products to calculate homotopy groups of the form $\pi_k(LBG)$.

The last fact we need is that for simply-connected spaces $X$, there is an isomorphism  $[S^k,X]\simeq \pi_n(X,x)$ for any $x\in X$ \cite[\S III.1]{Whitehead}.
We shall use this result repeatedly below, with one variation in \S\ref{subsect:S3xS3_Wick}.

\subsection{Calcuating the topological sectors}
\label{sect:calc_sectors}

With these homotopical tools in hand, we can return to considering the topological sectors for a $\String^{c_2}(4)$ higher gauge theory on the worldvolume $\Sigma_{M5}$ of a single heterotic M5-brane.
Each sector calculation below works towards the point of applying the Lemma~\ref{lemma:pis_of_BStringc} as a black box.
Note also that we repeatedly use the idea that a ``vanishing at $\infty$'' boundary condition on a map is equivalent to asking that the map extends continuously to a one-point compactification. 
Or more precisely, since we only require such boundary conditions in \emph{some} directions, we are using an isomorphism of the form
\[
\{\RR^k\times Y \xrightarrow{f} B \mid \forall y\in Y,\ \lim_{|x|\to\infty}f(x,y) = b\} \simeq \{S^k\times Y \xrightarrow{\overline{f}} B,\mid \forall y\in Y,\ \overline{f}(\infty,y) = b\},
\]
where $(B,b)$ is a pointed space, and $\infty\in S^k$ denotes the added point at infinity.

\subsubsection{Trivial cases}
\label{subsect:trivial}

It is worth noting that for an M5-brane worldvolume of the form $\RR^{5,1}$, with no boundary conditions, there are no topologically non-trivial sectors, because in this case the sectors are classified by homotopy classes of arbitrary maps $\RR^{5,1} \to B\String^{c_2}(4)$, and $\RR^{5,1}$ is contractible (we recall that no metric information enters into this, or the other calculations here).

Even if we wrap around just one or two circle directions, or a 2-sphere, we can get no nontrivial sectors:
\begin{align*}
[\RR^{4,1}\times S^1,B\String^{c_2}(4)] \simeq [S^1,B\String^{c_2}(4)] &\simeq \pi_1(B\String^{c_2}(4)) = 0,\\
[\RR^{3,1}\times S^2,B\String^{c_2}(4)] \simeq [S^2,B\String^{c_2}(4)] &\simeq \pi_2(B\String^{c_2}(4)) = 0,
\end{align*}
and
\begin{align*}
[\RR^{3,1}\times\TT^2,B\String^{c_2}(4)] & \simeq [\TT^2,B\String^{c_2}(4)] \\
& \simeq [S^1,LB\String^{c_2}(4)] \\
& \simeq [S^1,BL\String^{c_2}(4)] \\
& \simeq \pi_0(L\String^{c_2}(4)) \\
& \simeq \pi_0(\String^{c_2}(4) \ltimes \Omega\String^{c_2}(4)) \\
& \simeq \pi_2(B\String^{c_2}(4)) = 0.
\end{align*}

\subsubsection{Unwrapped M5-brane, spatial boundary conditions}
\label{subsect:flat_decay_at_infty}

We will consider the case that $\Sigma = \RR^{0,1}\times \RR^5$, and will also assume that the gauge fields are stable in time, so that they do not become trivial at past or future infinity. 
This case is equivalent to $\Sigma = \RR^{0,1} \times S^5$, so that is also covered by this calculation.
Because $\RR^{0,1}$ is contractible, and we are imposing no constraints in the time direction, we can ignore this here and later, as appropriate.
We will assume that the gauge fields vanish at spatial infinity, 
and so the topological sectors are given by 
\[
	[S^5,B\String^{c_2}(4)]_* = \pi_5(B\String^{c_2}(4)) \simeq (\ZZ_2)^3.
\]
Here $[-,-]_*$ denotes based homotopy classes of based maps: those that take the canonical basepoint $\infty\in S^5$ to the basepoint $*$ in the classfying space.

\subsubsection{Wrapped around one circle, spatial boundary conditions}
\label{subsect:S1_decay}

In this case we are considering (homotopy classes of) maps $\RR^{4,1}\times S^1 \to B\String^{c_2}(4)$, such that at the limit of spatial infinity, the map approaches the basepoint $*$ in the classifying space. 
We can again ignore the factor of $\RR^{0,1}$, as there are no boundary conditions in the time direction.
The set of maps we need to consider is then isomorphic to
\[
	\{ c\colon S^4\times S^1 \to B\String^{c_2}(4) : c\big|_{\{\infty\}\times S^1} = \text{constant at }* \}
\]
The set of such maps up to homotopy is isomorphic to
\begin{align*}
	[S^4,\Omega B\String^{c_2}(4)]_* & = \pi_4(\Omega B\String^{c_2}(4))\\
	& \simeq  \pi_4(\Omega B\String^{c_2}(4)) = (\ZZ_2)^3
\end{align*}
as needed.

\subsubsection{Wick rotated, and wrapped around a 4-sphere and a 2-torus}
\label{subsect:T2}

Here, because we have Wick rotated, the set of maps we are looking at is
\begin{align*}
	[S^4\times \TT^2 , B\String^{c_2}(4)] & \simeq [S^4,L^2B\String^{c_2}(4)]\\
	& \simeq \pi_4(L^2B\String^{c_2}(4)).
\end{align*}
Now we can apply Lemmas~\ref{lemma:BL_is_LB} and \ref{lemma:LG_semidirect_prod}, to calculate that we get a composite weak homotopy equivalence
\begin{align*}
L^2B\String^{c_2}(4) & \xrightarrow{\sim} LBL\String^{c_2}(4)\\
					& \simeq LB\big(\String^{c_2}(4)\ltimes \Omega\String^{c_2}(4)\big)\\
					& \xrightarrow{\sim} BL \big(\String^{c_2}(4)\ltimes \Omega\String^{c_2}(4)\big)\\
					& \simeq B\Big\{\big(\String^{c_2}(4)\ltimes \Omega\String^{c_2}(4)\big) \ltimes \Omega\big(\String^{c_2}(4)\ltimes \Omega\String^{c_2}(4)\big)\Big\}\\
					& \simeq B\Big\{\big(\String^{c_2}(4)\ltimes \Omega\String^{c_2}(4)\big) \ltimes \big(\Omega\String^{c_2}(4)\ltimes \Omega^2\String^{c_2}(4)\big)\Big\}
\end{align*}
Thus we get an induced isomorphism 
\begin{align*}
	&\ \pi_4(L^2B\String^{c_2}(4)) \\
	&\ \simeq \pi_4 \Big [ B\Big\{\big(\String^{c_2}(4)\ltimes \Omega\String^{c_2}(4)\big) \ltimes \big(\Omega\String^{c_2}(4)\ltimes \Omega^2\String^{c_2}(4)\big)\Big\}\Big ]\\
	&\ \simeq \pi_3 \Big [\big(\String^{c_2}(4)\ltimes \Omega\String^{c_2}(4)\big) \ltimes \big(\Omega\String^{c_2}(4)\ltimes \Omega^2\String^{c_2}(4)\big)\Big ]\\
	&\ \simeq \pi_3(\String^{c_2}(4))\times \pi_4(\String^{c_2}(4)) \times \pi_4(\String^{c_2}(4)) \times \pi_5(\String^{c_2}(4))\\
	&\ \simeq  \ZZ^2 \times (\ZZ_2)^9.
\end{align*}

\subsubsection{Wrapped around a 3-sphere, no spatial boundary conditions}
\label{subsect:S3_unwrapped}

Because there are no boundary conditions in the non-compact directions, the calcuation is much simpler:
\begin{align*}
	[\RR^{2,1}\times S^3,B\String^{c_2}(4)] & \simeq [S^3,B\String^{c_2}(4)]\\
	& \simeq \pi_3(B\String^{c_2}(4)) =0.
\end{align*}

\subsubsection{Wrapped around a 3-sphere, spatial boundary conditions}
\label{subsect:S3_unwrapped_decay}

We are considering homotopy classes of maps of the form $c\colon \RR^{2,1} \times S^3\to B\String^{c_2}(4)$, but where $c(x,t,p) \to * \in B\String^{c_2}(4)$ as $|x| \to \infty$. 
The homotopy classes of such maps are isomorphic to
\begin{align*}
	[S^3,\Omega^2B\String^{c_2}(4)] & \simeq [S^3,\Omega\String^{c_2}(4)]\\
	& \simeq \pi_3(\Omega\String^{c_2}(4))\\
	& \simeq \pi_5(B\String^{c_2}(4))\\
	& \simeq (\ZZ_2)^3.
\end{align*}

\subsubsection{Wrapped around 3-sphere, Wick-rotated spacetime boundary conditions}
\label{subsect:S3xS3_Wick}

This example is similar to the one in Section~\ref{subsect:S3_unwrapped_decay}, where the classifying maps extend to the compactification, but satisfy a boundary condition.
But it also uses the fact  \cite[\S III.4.18]{Whitehead} that a topological group is a simple space---so that the $\pi_1$-action on the higher homotopy groups $\pi_k$ is trivial---and $S^3$ is simply-connected, so that the set of sectors is \cite[III.1.11--III.1.13]{Whitehead}
\begin{align*}
	[S^3,\Omega^3B\String^{c_2}(4)] & \simeq [S^3,\Omega^2\String^{c_2}(4)] \\
	& \simeq [S^3,\Omega^2\String^{c_2}(4)^\sim]/\pi_1(\Omega^2\String^{c_2}(4)) \\
	& \simeq \pi_3(\Omega^2\String^{c_2}(4))/\pi_1(\Omega^2\String^{c_2}(4)) \\
	& \simeq \pi_5(\String^{c_2}(4)) \simeq (\ZZ_2)^3,
\end{align*}
where $X^\sim$ denotes the universal covering space of $X$.

\subsubsection{Wrapped around a 4-sphere}
\label{subsect:S4}

As there are no boundary conditions, this is given by
\[
	[\RR^{1,1}\times S^4,B\String^{c_2}(4)] \simeq \pi_4(B\String^{c_2}(4)) = \ZZ^2.
\]

\subsubsection{Wrapped around a 3-sphere and a 2-torus}
\label{subsect:RxS3xT2}

There are no boundary conditions, so we can ignore the time direction, reuse the calculation of $L^2B\String^{c_2}(4)$ from Section~\ref{subsect:T2}, and get the set of sectors to be
\begin{align*}
	&\ [\RR^{0,1}\times S^3\times \TT^2,B\String^{c_2}(4)]\\
	&\ \simeq[S^3,L^2B\String^{c_2}(4)]\\
	&\ \simeq\pi_3 \Big [ B\Big\{\big(\String^{c_2}(4)\ltimes \Omega\String^{c_2}(4)\big) \ltimes \big(\Omega\String^{c_2}(4)\ltimes \Omega^2\String^{c_2}(4)\big)\Big\}\Big ] \\
	&\ \simeq\pi_2 \Big [ \big(\String^{c_2}(4)\ltimes \Omega\String^{c_2}(4)\big) \ltimes \big(\Omega\String^{c_2}(4)\ltimes \Omega^2\String^{c_2}(4)\big)\Big ] \\
	&\ \simeq \pi_2(\String^{c_2}(4))\times \pi_3(\String^{c_2}(4)) \times \pi_3(\String^{c_2}(4)) \times \pi_4(\String^{c_2}(4))\\
	&\ \simeq \ZZ^4 \times (\ZZ_2)^3.
\end{align*}

\section{Conclusion}

We have in this article calculated the possible topological sectors for $\String^{c_2}(4)$ higher gauge theories on various M5-brane topologies.
Many nontrivial sectors arise on examples of M5-brane topologies that have been considered in the literature, including those for 5d instantons/4d skyrmions for $SU(2)=Sp(1)$ Yang--Mills on M5-branes wrapped around an $S^1$.
There are a number of torsion sectors, whose physical interpretation is not yet clear.

It should be noted, however, that in the context of \cite{FSS_20b}, the higher gauge fields here are not arbitrary, but determined by other data, including the embedding of the M5-brane into spacetime.
The sector classification in this article only considers the topology of the M5-brane, and not the spacetime topology, which will potentially add additional nontrivial constraints.
This may lead to fewer sectors, or introduce more sectors due to more than one true sector on the whole M5-brane/spacetime system giving rise to the same topological sector just on the M5-brane.
Future work will also have to examine the torsion sectors that have shown up here, to see if they survive to the full sector classification arising from Hypothesis H.

\acknowledgments

The author is grateful to Urs Schreiber for discussing the physical interpretations of the results here, and supplying critical references, and to Neil Strickland for outlining on MathOverflow\footnote{\url{https://mathoverflow.net/a/354827/}} a proof of the folklore result Lemma~\ref{lemma:BL_is_LB} in full generality.
The anonymous referee highlighted a brief comment of Tom Goodwillie to Strickland's MathOverflow answer, resulting in a simple conceptual proof of Lemma~\ref{lemma:BL_is_LB}, now included in the article; thanks also to the referee for suggesting a number of improvements of exposition.

This work is supported by the Australian Research Council's Discovery Projects scheme (project number DP180100383), funded by the Australian Government.



\end{document}